\Crefname{table}{Table}{Tables}
\Crefname{assumption}{Assumption}{Assumptions}
\crefname{equation}{}{}
\newtheorem{theorem}{Theorem}[section]
\def\commenton{1}
\newcommand{\HNcomment}[1]{\if\commenton1{\color{red}(HN: #1)}\fi}
\title{\vspace{-2cm} Breaking the Winner's Curse with Bayesian Hybrid Shrinkage}
\author{Richard Mudd}
\author{Rina Friedberg}
\author{Ilya Gorbachev}
\author{Houssam Nassif}
\author{Abbas Zaidi}
\affil{Meta Inc \authorcr
  \{\tt \normalsize rmudd, rinafriedberg ilyagorbachev, houssamn, abbaszaidi\}@meta.com}
\date{}
\begin{document}

\maketitle
\section{Introduction}

A ``Winner's Curse'' arises in large-scale online experimentation platforms~\citep{lee2018winner} when the same experiments are used to both select treatments and evaluate their effects. In these settings, classical difference-in-means estimators of treatment effects are upwardly biased and conventional confidence intervals are rendered invalid~\citep{andrews2024inference}. The bias scales with the magnitude of sampling variability and the selection threshold. It decreases inversely with the treatment's true effect size~\citep{van2021significance}. 

Bayesian methods that leverage empirical priors have been proposed and demonstrated, in specific settings, to yield superior inferential properties under selection~\citep{ejdemyr2024estimating, kessler2024overcoming} when compared to the standard ``Face Value'' estimators. However, Bayesian analysis is sensitive - and arguably especially susceptible under selection~\citep{rasines2022bayesian} - to the choice of prior, and models that require computationally expensive numerical integration techniques are ill-suited for at-scale deployment~\cite{Li2027optimalPolicies}.

We propose a new Bayesian approach that incorporates experiment-specific ``local shrinkage'' factors that mitigate sensitivity to the choice of prior and improve robustness to assumption violations. Crucially, we demonstrate how the associated posterior distribution can be estimated without numerical integration techniques, making it a practical choice for at-scale deployment. 

\section {Modeling Selection in Bayesian Analyses}  


In a frequentist analysis, the Winner's Curse is an artifact of the sampling distribution of a statistic being altered by selection. This implies that an explicit correction for selection via a selection model is always required~\cite{andrews2024inference}. 

The need for a selection model under the Bayesian paradigm, is less obvious. One longstanding perspective is that selection is not required because posterior distributions are already conditioned on the data~\cite{dawid1994selection}: If a different sample of data was observed under a hypothetical replication, this is irrelevant with respect to the posterior at hand. This conditioning does not necessarily account for the selection mechanism that led to those observations though.

The interaction of the selection mechanism with the parameter space determines whether such an adjustment is needed in a Bayesian analysis~\citep{yekutieli2012adjusted}. If we consider parameters and data to be sampled from a joint distribution, no explicit selection adjustment is required. On the other hand, if we consider a parameter to be sampled from its marginal distribution and held fixed, with the data then sampled repeatedly from the associated conditional distribution, an adjustment is required. This abstraction is best illustrated by example~\cite{woody2022optimal}:

\begin{itemize}
\item{Suppose we consider a set of experiments that test different changes to a product. In each experiment, the data collected is used to make a launch decision according to some predetermined criteria. For experiments that launched, we want to estimate the effect size. In this context a Bayesian approach would \textit{not} require adjustment for selection.}
\item{Suppose we consider a single proposed change to a recommender system, that we decide to test multiple times in different experiments. If we are interested in using the set of experiments that pass some launch criteria to estimate the size of the proposed change, we \textit{would} need to adjust for selection.} 
\end{itemize}

Different settings leading to different solutions arise from the likelihood principle: likelihoods that differ only by a scalar multiplier are equivalent, and should lead to same inferences. 

\begin{theorem} \label{theorem:jointselection}
With joint selection, the posterior distribution under selective inference is equivalent to its unadjusted counterpart, eliminating the need for a separate selection model.
\end{theorem}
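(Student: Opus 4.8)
The plan is to reduce the claim to a one-line application of Bayes' theorem, once the notion of ``joint selection'' has been given the right probabilistic content. I would first fix notation: let $\theta$ denote the parameter (effect size), $y$ the observed statistic, $p(\theta)$ the prior and $p(y\mid\theta)$ the likelihood, so the joint law is $p(\theta,y)=p(\theta)\,p(y\mid\theta)$. Selection is encoded by an event $S$ with conditional selection probability $p(S\mid\theta,y)$, which accommodates deterministic thresholds ($S=\{y\in A\}$) as well as randomized rules. The \emph{unadjusted} posterior is then $p(\theta\mid y)$ and the \emph{selection-adjusted} posterior is the regular conditional $p(\theta\mid y,S)$; the whole content of the theorem is that these two coincide.

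The first and most important step is to translate ``joint selection'' into the statement that selection depends on the parameter only through the data, i.e. $S$ is conditionally independent of $\theta$ given $y$, so that $p(S\mid\theta,y)=p(S\mid y)$. This is precisely the content of the launch-decision example: the keep/discard decision is a function of the realized data alone, while $\theta$ and $y$ are drawn together from $p(\theta,y)$. Under this condition the computation is immediate,
\begin{align*}
p(\theta\mid y,S)
&=\frac{p(S\mid\theta,y)\,p(y\mid\theta)\,p(\theta)}{\int p(S\mid\theta',y)\,p(y\mid\theta')\,p(\theta')\,d\theta'}\\
&=\frac{p(S\mid y)\,p(y\mid\theta)\,p(\theta)}{p(S\mid y)\int p(y\mid\theta')\,p(\theta')\,d\theta'}
=p(\theta\mid y),
\end{align*}
since the $\theta$-free factor $p(S\mid y)$ cancels between numerator and denominator. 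I would then phrase the conclusion through the likelihood principle invoked just above the statement: conditioning on $S$ multiplies the likelihood by the scalar-in-$\theta$ factor $p(S\mid y)$, and likelihoods agreeing up to a multiplicative constant yield identical inferences, so the selection model is inferentially inert.

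To keep the theorem sharp rather than vacuous, the second step is to exhibit where the argument fails and thereby delimit its scope — the part I expect to require the most care. I would contrast the joint case with the fixed-parameter (repeated-testing) scenario of the second bullet, in which $\theta$ is held fixed and only selected draws are ever observed, so the correct likelihood is the \emph{truncated} density $p(y\mid\theta,S)=p(y\mid\theta)\,p(S\mid y)/p(S\mid\theta)$ with normalizer $p(S\mid\theta)=\int p(S\mid y')\,p(y'\mid\theta)\,dy'$. Here the selection factor is $p(S\mid\theta)$, which genuinely depends on $\theta$, does not cancel, and cannot be absorbed into the normalizing constant — exactly the non-scalar modification of the likelihood that the likelihood principle flags as inference-changing. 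Isolating why the same selection event enters as the $\theta$-free factor $p(S\mid y)$ in one model but as the $\theta$-dependent factor $p(S\mid\theta)$ in the other — namely, whether $\theta$ is marginalized jointly with $y$ or conditioned upon as fixed — is the crux of the argument, and I would close by noting the measure-theoretic point that $p(\theta\mid y,S)$ remains well defined as a regular conditional distribution even though $\{Y=y\}$ has probability zero while $S$ need not.
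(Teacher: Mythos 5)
Your proof is correct and takes essentially the same route as the paper: both arguments observe that under joint selection the selection factor depends on the parameter only through the data, so it appears identically in the numerator (joint density) and denominator (marginal of the data) of Bayes' theorem and cancels, leaving the unadjusted posterior $p(\theta\mid y)$. Your version is marginally more general (allowing a randomized rule $p(S\mid y)$ rather than the paper's indicator $1(\hat{\theta}\in S)$) and your contrast with the truncated, fixed-$\theta$ likelihood matches the paper's surrounding discussion, but the core cancellation is the same.
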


\begin{proof}
Under joint selection, let the joint distribution of $\{\hat{\theta}, \theta\}$ be given by $\pi_{S}(\hat{\theta}, \theta) = \pi(\theta)\cdot f(\hat{\theta}|\theta)\cdot \frac{1(\hat{\theta}\in S)}{\pi(S)}$, where the marginal selection probability is $\pi(S) = \int\int_{S}\pi(\hat{\theta}|\theta)\pi(\theta)\mathrm{d}\hat{\theta}\mathrm{d}\theta$. The marginal density of the data under selection is $m(\hat{\theta}) = \int f(\hat{\theta}|\theta)\pi(\theta)\mathrm{d}\theta$. Truncated to the selected sample $S$, this marginal density is $\int 1(\hat{\theta}\in S)\pi(\theta)\frac{f(\hat{\theta}|\theta)}{m(S)}\mathrm{d}\theta = \frac{m(\hat{\theta})}{m(S)}\cdot 1(\hat{\theta}\in S)$. Next consider the posterior under selection ${\pi_{S}(\hat{\theta}|\theta}) = \frac{\pi_{S}(\hat{\theta}, \theta)}{m_{S}} = \frac{\pi(\theta)\cdot f(\hat{\theta}|\theta)}{m(\hat{\theta})} = \pi(\theta|\hat{\theta})$ which is identical to the unadjusted posterior.
\end{proof}

Many applied settings can reasonably be described by the joint distribution paradigm, meaning that a well-motivated prior is often sufficient to overcome the Winner's curse. In this context, strategies to validate the modeling choice and ensure a well-calibrated prior distribution are vital.
\vspace{0.3cm}
\section{Formulation}  

\subsection{A Bayesian Model for Inference Under Selection}  

Consider a collection of $N$ experiments indexed by $i = 1, \ldots, N$. Each experiment consists of $m$ units, indexed by $j = 1, \ldots, m$, which are assigned to treatment conditions denoted by $Z_{ji}$. $Z_{ji} = 1$ indicates that unit $j$ in experiment $i$ receives the treatment, while $Z_{ji} = 0$ indicates control. The outcome of interest for each unit is $Y_{ji}$. Our estimand for each experiment $i$ is the ratio of the expected potential outcomes under treatment and control, expressed as $\theta_{i} = \frac{E[Y(1)]}{E[Y(0)]}$. 

The classical or \textit{``Face Value''} estimator, $\hat{\theta}_{i}^{FV}$, is computed as the ratio of sample averages of observed outcomes in treated and control groups:
\[
\hat{\theta}_{i}^{FV} = \frac{\displaystyle\frac{1}{\sum_{j=1}^{m} Z_{ji}} \sum_{j=1}^{m} Y_{ji} Z_{ji}}{\displaystyle\frac{1}{\sum_{j=1}^{m} (1 - Z_{ji})} \sum_{j=1}^{m} Y_{ji} (1 - Z_{ji})}
\]
with standard error $\hat{\sigma}_{i}$. This estimator is known to be biased in the presence of selection effects.

To address this, we propose a new \textit{``Bayesian Hybrid Shrinkage''} approach, which can be formulated as a hierarchical post-hoc model:
\begin{align*}
    \hat{\theta_{i}}|\theta_{i}, \hat{\sigma_{i}}^{2} &\sim \mathrm{N}(\theta_{i}, \hat{\sigma^{2}_{i}}),\\
    \theta_{i} | m_{0}, \lambda_{i}, \tau &\sim \mathrm{N}(m_{0}, \lambda_{i}\cdot\tau),\\
    \lambda_{i}|a, b &\sim \mathrm{InverseGamma}\left(\frac{a}{2}, \frac{b}{2}\right).
\end{align*}

True effect $\theta_{i}$ is assigned a normal prior with mean $m_{0}$ and variance composed of a global scale parameter $\tau$ modulated by local shrinkage factor $\lambda_i$, which allows the model to adaptively shrink estimates differently across experiments. $\lambda_i$ is an inverse-Gamma hyperprior parameterized by $a$ and $b$, which control the shrinkage distribution across experiments. The hierarchical structure balances borrowing strength across experiments with flexibility to accommodate experiment-specific variability.

\vspace{0.3cm}

\subsection{Inference}  
Inference can be carried out either via posterior simulation—using iterative sampling over the target and nuisance parameters—or analytically by fixing $\lambda_i$ at its posterior mode.
The posterior distribution of the true effect $\theta_i$, conditional on the observed estimator $\hat{\theta}_i$, the local shrinkage factor $\lambda_i$, and the global scale parameter $\tau$, is given by:
\begin{align}
\theta_i \mid \hat{\theta}_i, \lambda_i, \tau \sim \mathrm{N}\left(
\frac{\hat{\sigma}_i^2}{\hat{\sigma}_i^2 + \lambda_i \tau} m_0 + \frac{\lambda_i \tau}{\hat{\sigma}_i^2 + \lambda_i \tau} \hat{\theta}_i, \quad
\left(\frac{1}{\hat{\sigma}_i^2} + \frac{1}{\lambda_i \tau}\right)^{-1}
\right).
\end{align}
A special case of this posterior arises when the local shrinkage factor is fixed at $\lambda_i = 1$ for all experiments, corresponding to a Bayesian estimator that imposes only global shrinkage~\citep{kessler2024overcoming}. We refer to this as the \textit{``Bayesian Global Shrinkage''} model.

\subsection{Validation via Predictive Checking}\label{section:validation}
Given the complexities of validation in experimental settings, we take inspiration from \textit{predictive checking} \cite{rubin1998more, gelman1996posterior}. This type of assessment uses predictive simulation of new data under a model of interest $M$ that we denote as $\hat{\theta}_{rep}$, along with a related statistic $T(\cdot)$ to characterize discrepancies against its observed counterpart $g(T(\hat{\theta}_{rep}), T(\hat{\theta}))$. This discrepancy can be used to assess any aspect of the model we want to validate (e.g., prior parameter choices or decision boundary) facilitated by the construction of a reference distribution,
$p(g(T(\hat{\theta}_{rep}), T(\hat{\theta}))|M, \hat{\theta})$.

These flexible constructions enable assessment of quantities like coverage for uncertainty intervals or goodness-of-fit of the posterior distribution. Specifically, to understand goodness-of-fit, one intuitive quantity is the tail area probability $g(T(\hat{\theta}_{rep}), T(\hat{\theta})) = T(\hat{\theta}_{rep}) \geq T(\hat{\theta})$, an analogue to a p-value that can be computed for a given model averaged over the posterior distribution of the parameter of interest $p(\theta_{i}|M, \hat{\theta})$. This concept can be augmented by data splitting strategies such as replication studies or data fission~\cite{leiner2023data}. We utilize this concept in the following section to assess performance first in simulation and then empirically from real experiment data.
\vspace{0.3cm}
\section{Results}


\subsection {Simulated Experiments}  
To evaluate the performance of the proposed \textit{Bayesian Hybrid Shrinkage} approach, we conduct simulated experiments comparing it against the \textit{Face Value} and \textit{Bayesian Global Shrinkage} approaches. These simulations are designed to reflect plausible real-world scenarios where the analysis prior is misspecified in various ways:
\begin{enumerate}
    \item \textbf{Misspecified mean:} The analysis prior mean $m_0 = 0$ differs from the true effect size distribution mean, where $\theta_i \sim \mathrm{N}(\mu, \epsilon)$.
    \item \textbf{Heavy-tailed distributions:} The true effect sizes follow a t-distribution with $\nu$ degrees of freedom, $\theta_i \sim t_\nu(\mu, \epsilon)$, which has heavier tails than the assumed normal analysis prior. 
    \item \textbf{Hidden Selection:} True effects are drawn as two-dimensional vectors from a bivariate normal distribution, $\boldsymbol{\theta_i} \sim \mathrm{N}_2(\boldsymbol{\mu}, \boldsymbol{\Sigma})$, where the covariance matrix $\boldsymbol{\Sigma}$ has correlation $\rho$. Selection is applied jointly on both dimensions, but the analysis considers only one target parameter, testing robustness to unmodeled correlated selection.
\end{enumerate}
The results of these simulations are summarized in Figure~\ref{fig:misspecified}, which compares the three approaches across the three simulation settings and three performance metrics: \textit{Mean Squared Error (MSE)}, \textit{Bias}, and \textit{Coverage Probability} of 90\% uncertainty intervals. As expected, the \textit{Bayesian Global Shrinkage} approach performs optimally when the prior is correctly specified. However, the \textit{Bayesian Hybrid Shrinkage} approach is shown to be more robust to misspecification and consistently outperforms the \textit{Face Value} approach in all settings considered.

\begin{figure*}
    \centering
    \subfloat{
        \includegraphics[width=0.3\textwidth]{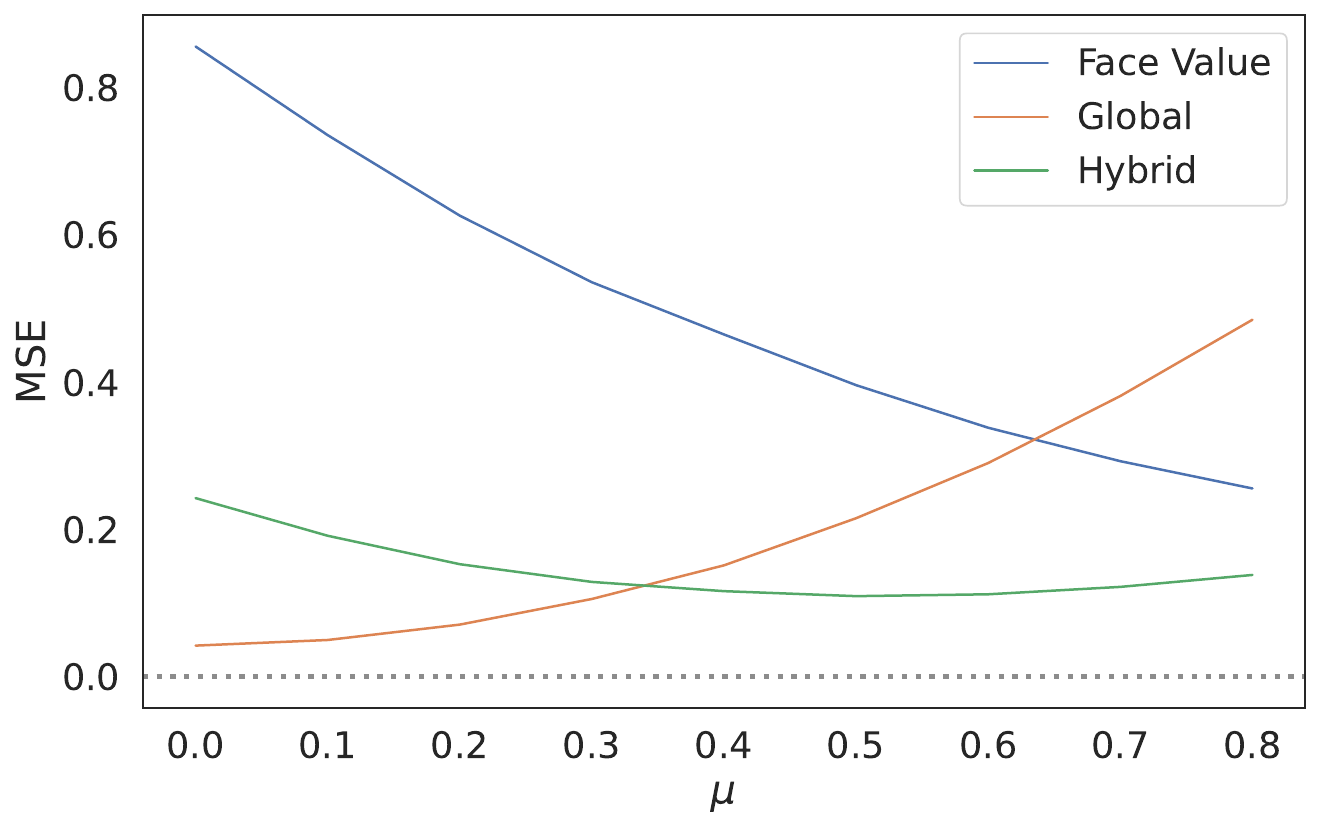}
        \label{subfig:a}
    }\hfill
    \subfloat{
        \includegraphics[width=0.3\textwidth]{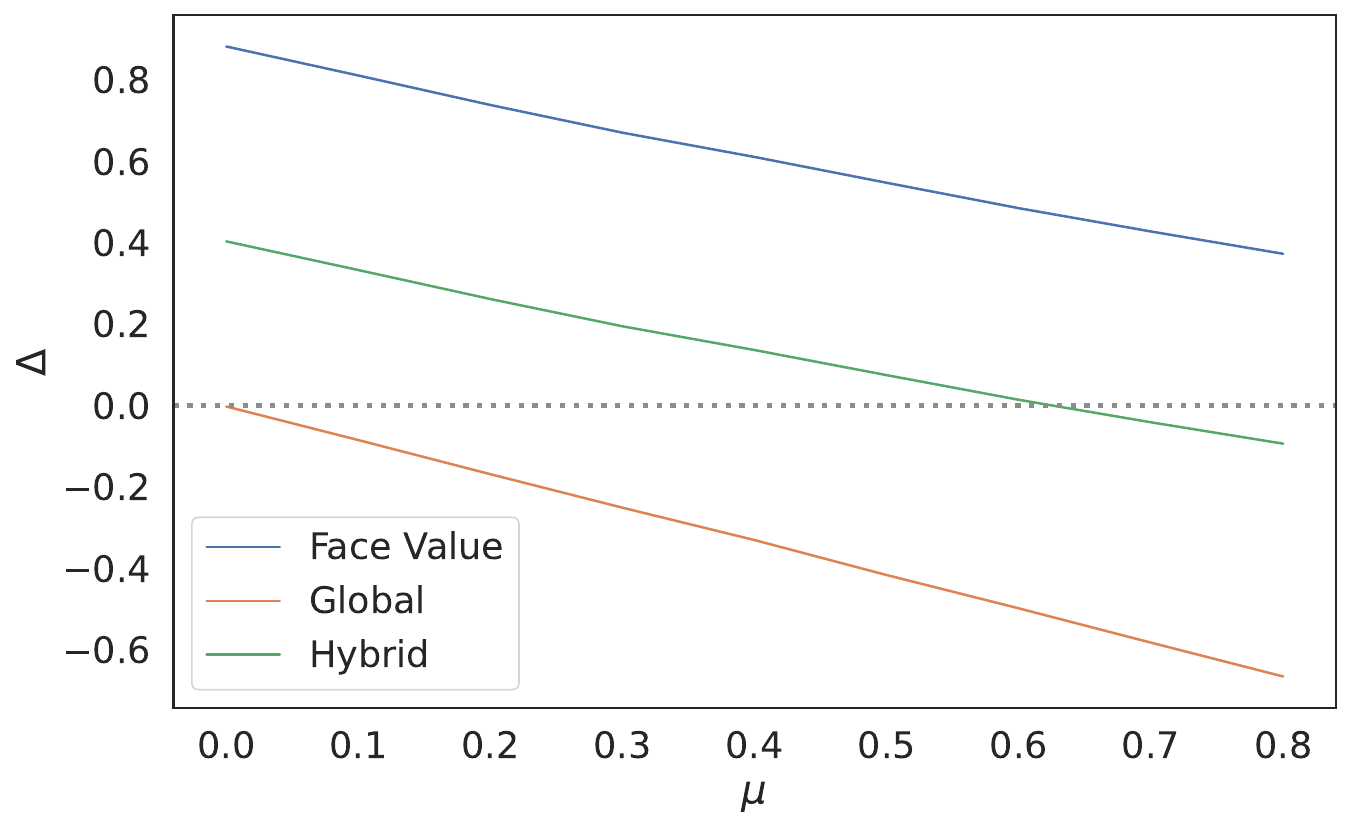}
        \label{subfig:b}
    }\hfill
    \subfloat{
        \includegraphics[width=0.3\textwidth]{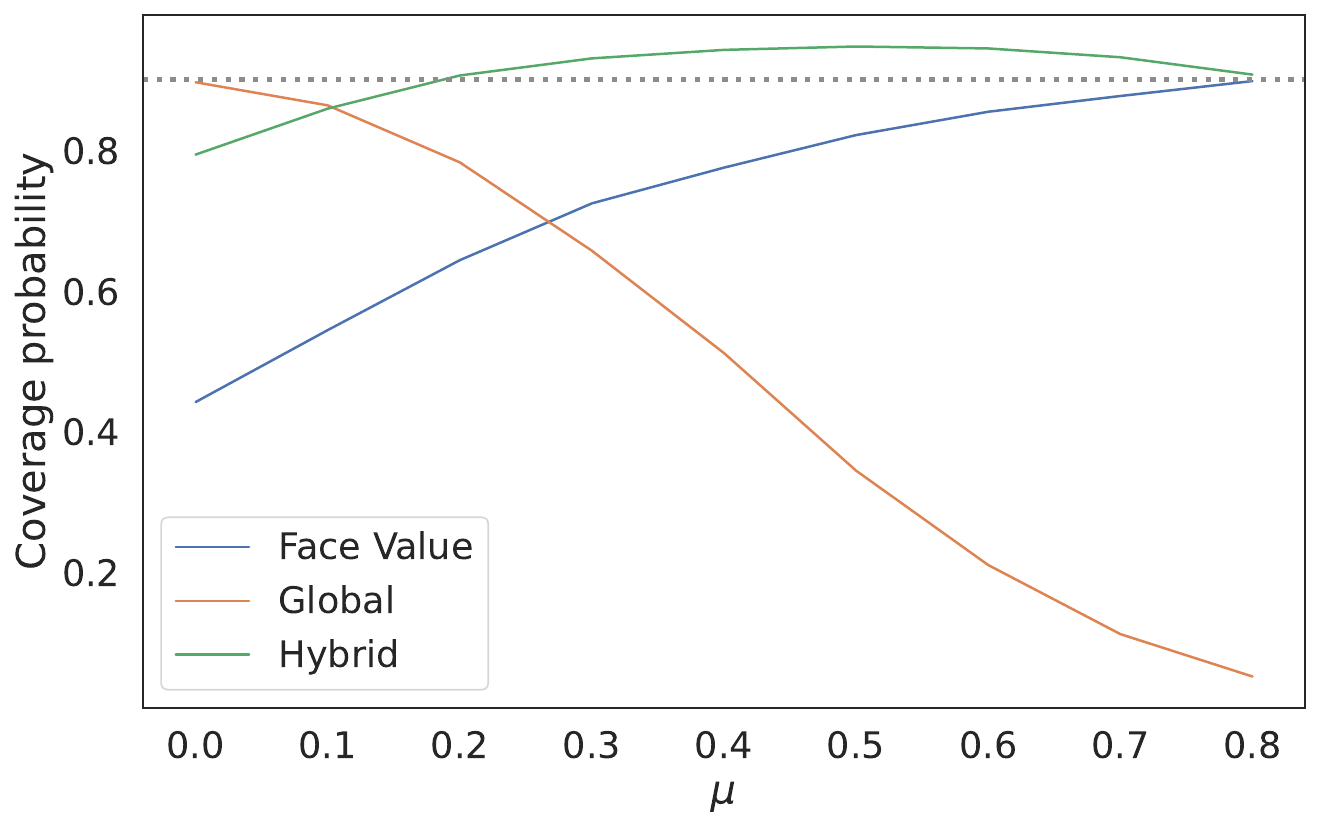}
        \label{subfig:c}
    }\\
    \subfloat{
        \includegraphics[width=0.3\textwidth]{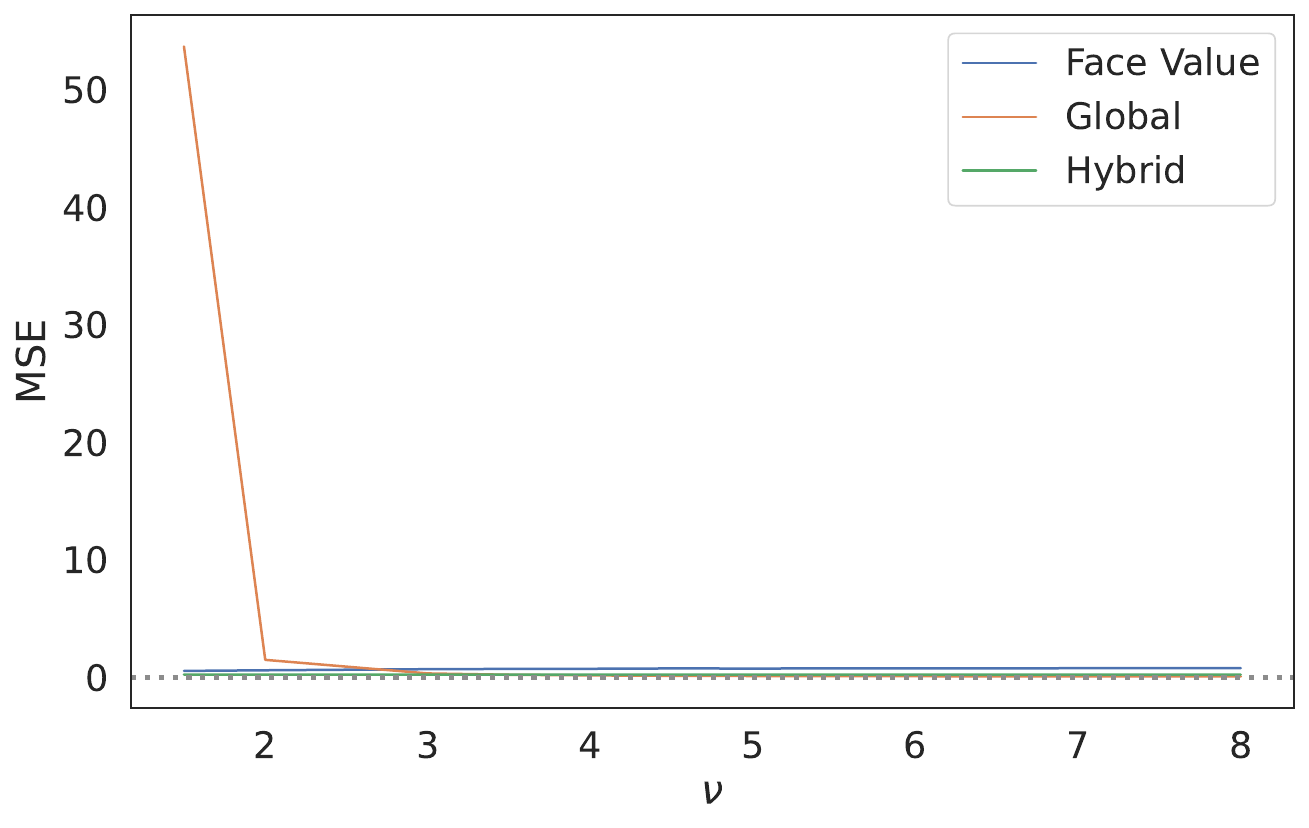}
        \label{subfig:d}
    }\hfill
    \subfloat{
        \includegraphics[width=0.3\textwidth]{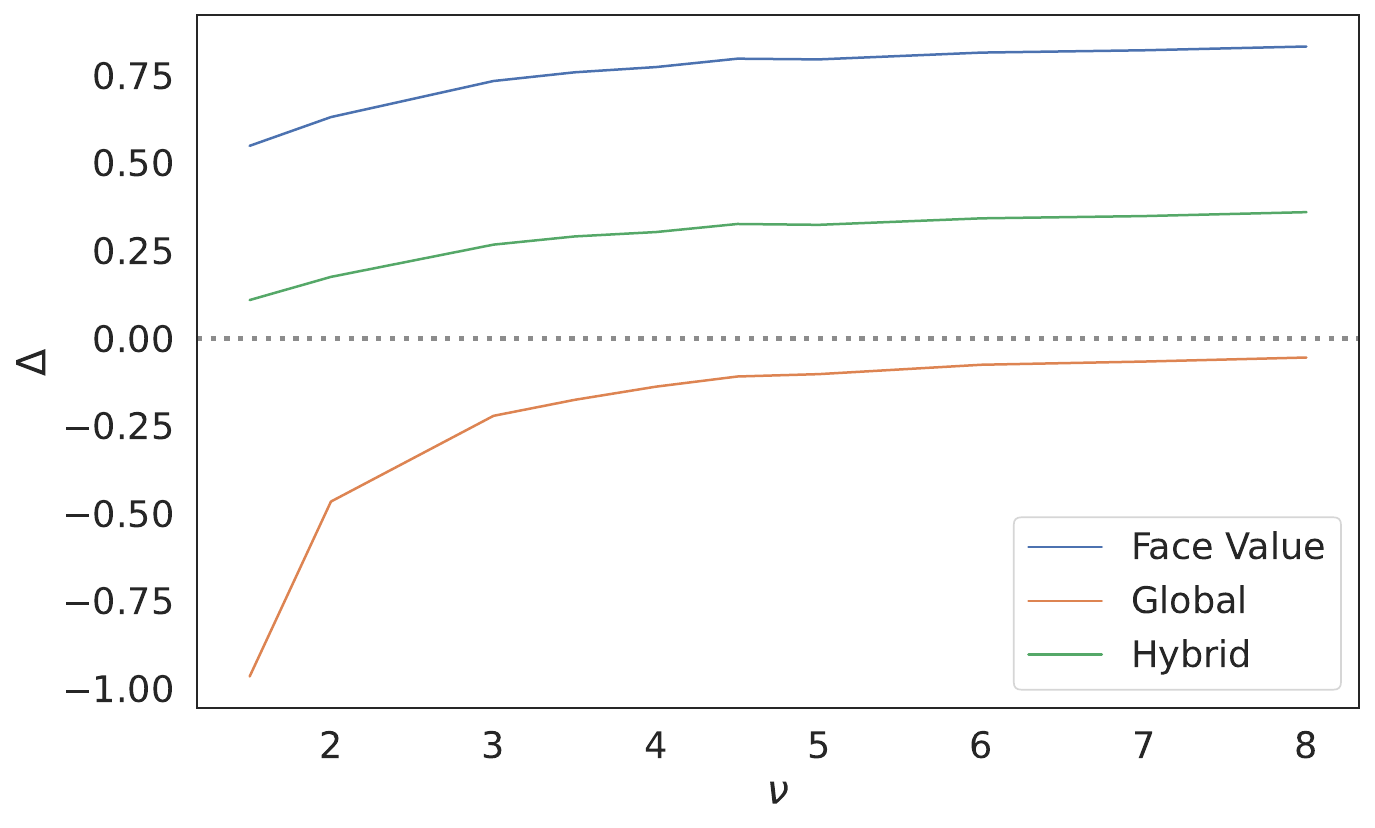}
        \label{subfig:e}
    }\hfill
    \subfloat{
        \includegraphics[width=0.3\textwidth]{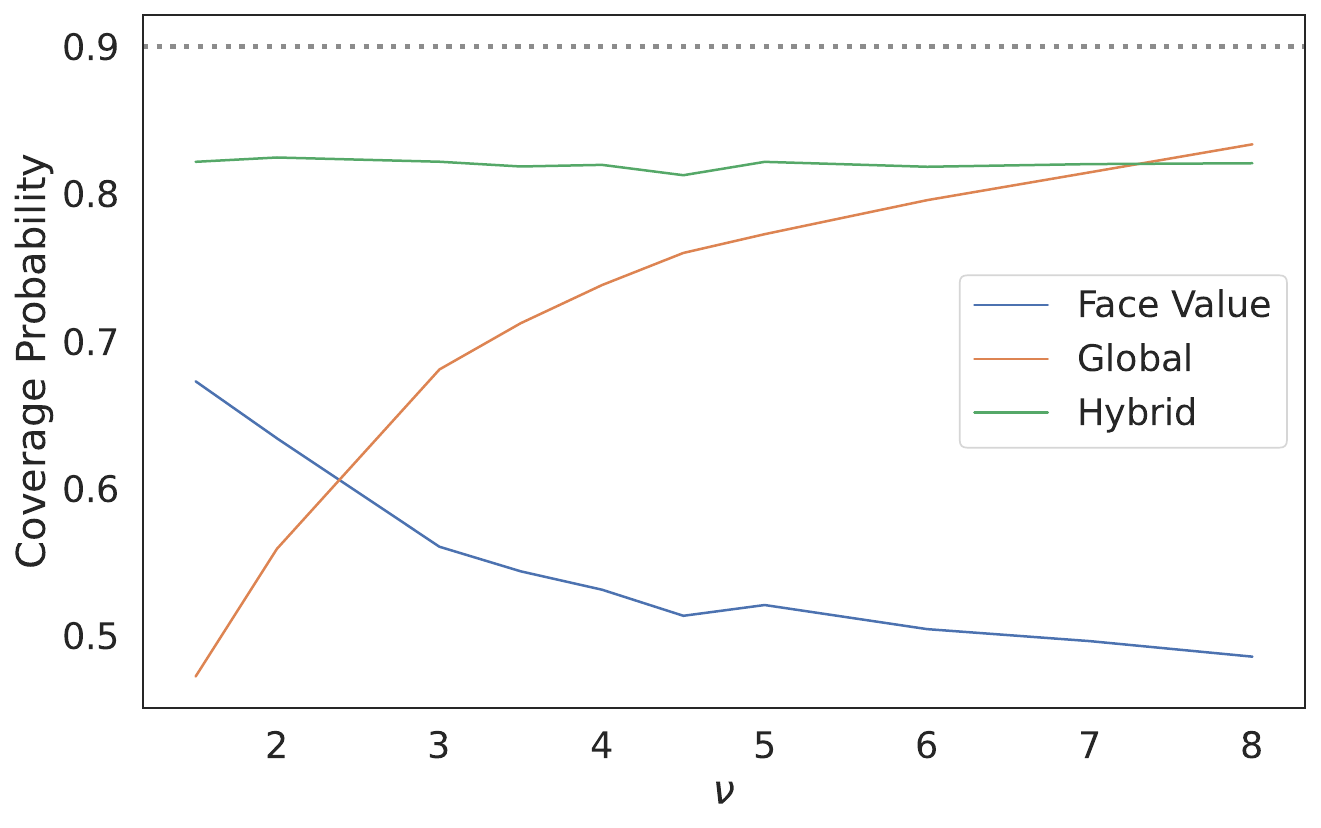}
        \label{subfig:f}
    }\\
    \subfloat{
        \includegraphics[width=0.3\textwidth]{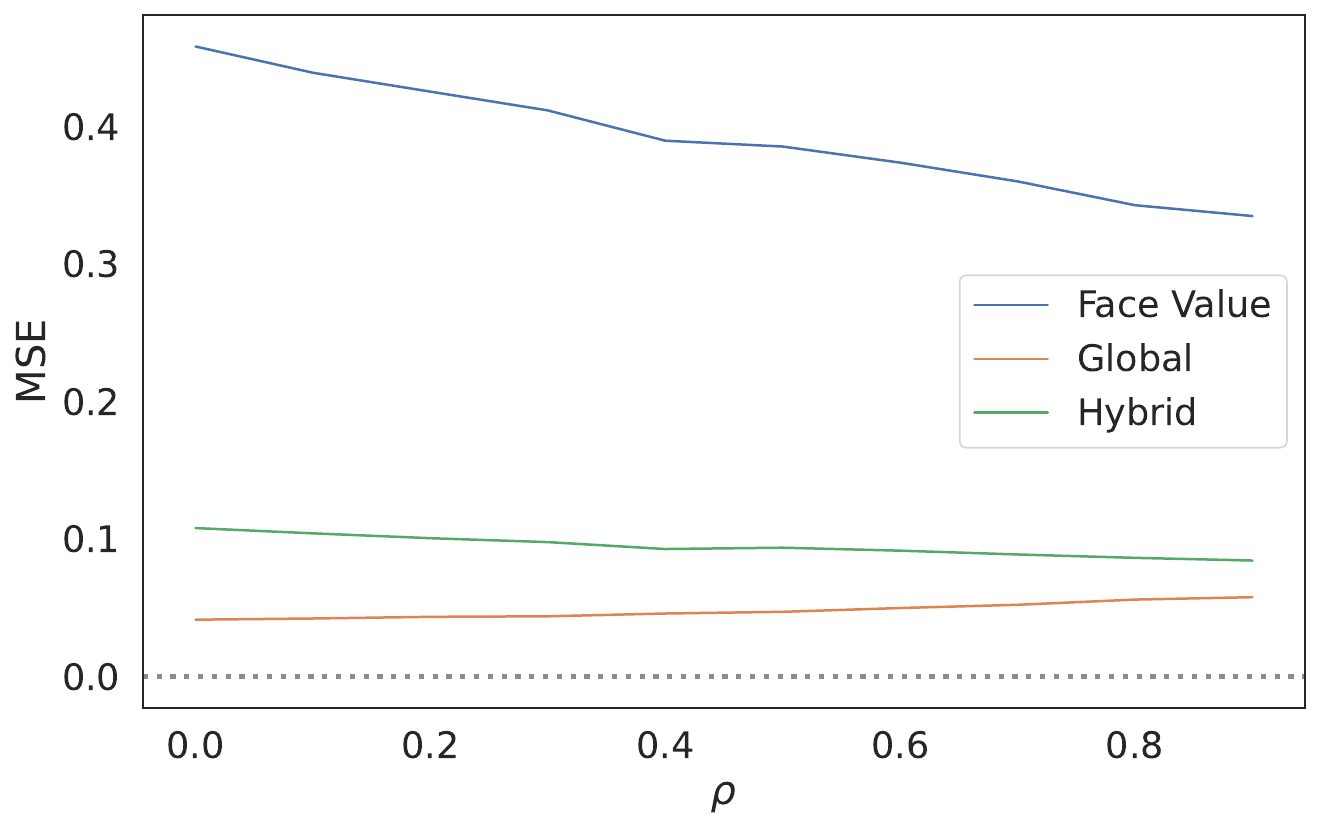}
        \label{subfig:g}
    }\hfill
    \subfloat{
        \includegraphics[width=0.3\textwidth]{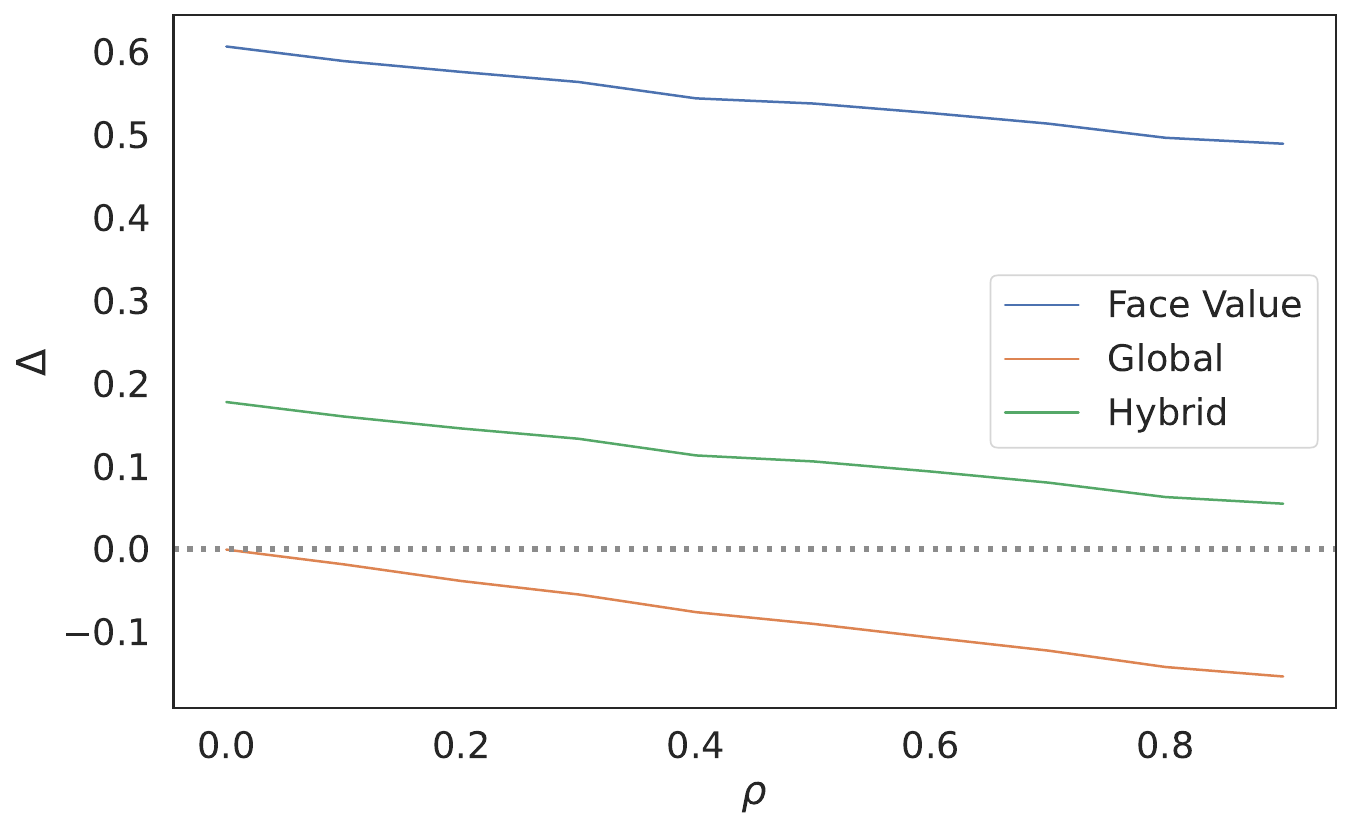}
        \label{subfig:h}
    }\hfill
    \subfloat{
        \includegraphics[width=0.3\textwidth]{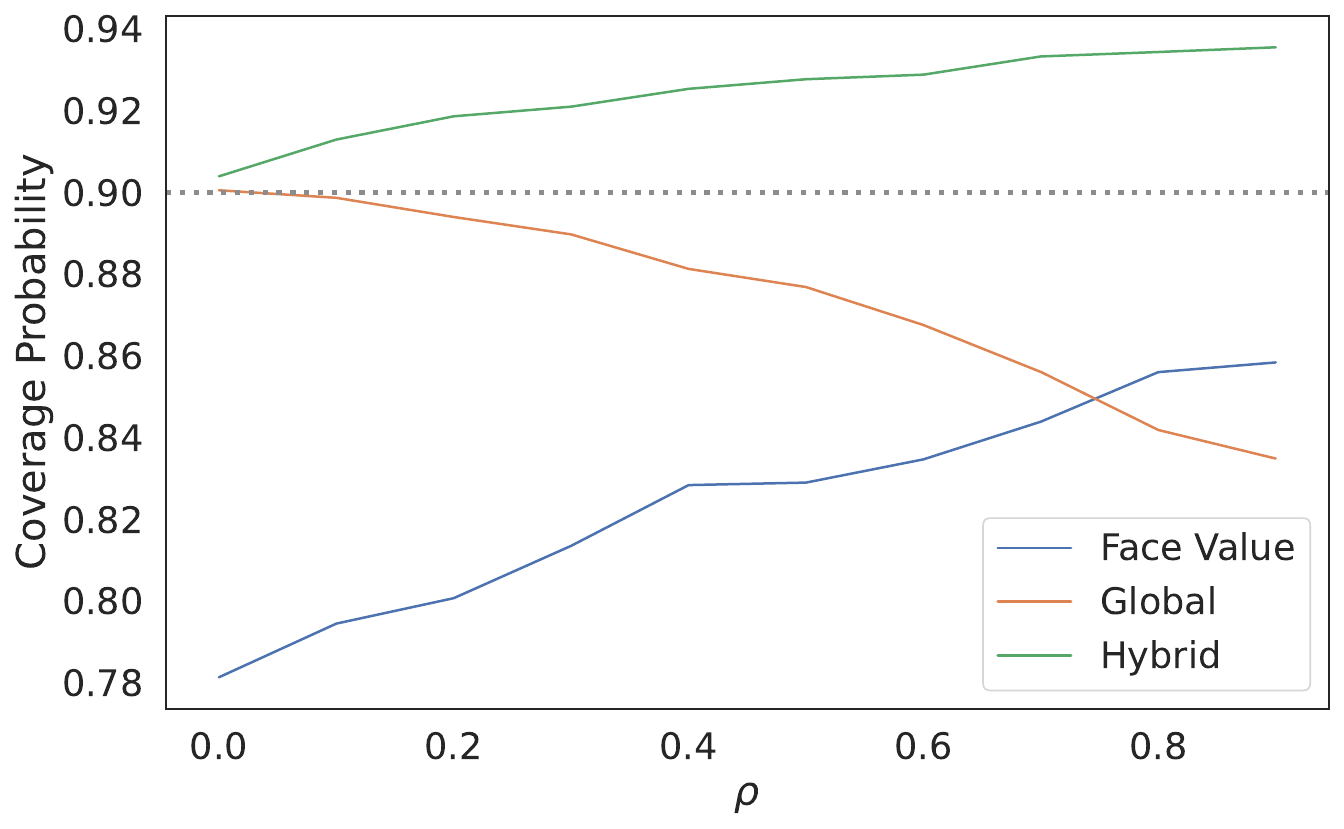}
        \label{subfig:i}
    }

    \caption{MSE (left), bias ($\Delta$, center), and coverage probability (right) for the \textit{Face Value} (blue), \textit{Bayesian Global Shrinkage} (orange), and \textit{Bayesian Hybrid Shrinkage} (green) approaches, as a function of prior mean ($\mu$, top row), degrees of freedom ($\nu$, middle row), and correlation ($\rho$, bottom row).}
    \label{fig:misspecified}
\end{figure*}

\vspace{0.2cm}

\subsection{Empirical Analysis} 

We further evaluate the three approaches in a real-world setting, analyzing 167 experiments with paired replication studies. We compare the mean absolute error (MAE) and the coverage probability for 90\% confidence/credible intervals. The results, summarized in Table~\ref{tbl:summary}, demonstrate that the \textit{Bayesian Hybrid Shrinkage} approach outperforms both the \textit{Face Value} and \textit{Bayesian Global Shrinkage} methods in terms of MAE, while maintaining strong coverage properties.

\begin{table}[h!]
    \centering
    \vspace{-0.1cm}
    \begin{tabular}{||c c c||} 
     \hline
     Model & MAE & Coverage \\ [0.5ex] 
     \hline\hline
     Face Value & 1.280 & 0.92 \\ 
     \hline
     Global Shrinkage & 1.121 & 0.91 \\
     \hline
     Hybrid Shrinkage & 1.012 & 0.91 \\ [1ex] 
     \hline
\end{tabular}
\caption{Summary of Performance Metrics -- Coverage and (MAE x 1000) -- across a collection of real-world experiments with paired replication studies.}
\label{tbl:summary}
\end{table}

\vspace{0.3cm}
\section{Conclusion}
This paper introduce a two stage Bayesian Shrinkage estimator to tackle the Winner's Curse with a scalable strategy for posterior inference. Future research plans include: expanding to additional validation strategies like prior elicitation to achieve specific operating characteristics; guidance for practical implementation of these techniques; offline-evaluation priors~\cite{Radwan2024Eval}; deeper formalization of theoretical properties for posterior inference. We anticipate that deeper exploration along these avenues will enable a wider adoption of these methods in the online experimentation domain.
\bibliographystyle{unsrt}
\bibliography{references}

@inproceedings{lee2018winner,
  title={Winner's curse: Bias estimation for total effects of features in online controlled experiments},
  author={Lee, Minyong R and Shen, Milan},
  booktitle={Proceedings of the 24th ACM SIGKDD international conference on knowledge discovery \& data mining},
  pages={491--499},
  year={2018}
}

@article{kessler2024overcoming,
  title={Overcoming the winner’s curse: Leveraging Bayesian inference to improve estimates of the impact of features launched via A/B tests},
  author={Kessler, Ryan},
  year={2024}
}

@inproceedings{ejdemyr2024estimating,
  title={Estimating the Returns from an Experimentation Program},
  author={Ejdemyr, Simon and Tingley, Martin and Shang, Yian and Brooks, Travis},
  booktitle={ACIC Conference},
  year={2024}
}

@article{andrews2024inference,
  title={Inference on winners},
  author={Andrews, Isaiah and Kitagawa, Toru and McCloskey, Adam},
  journal={The Quarterly Journal of Economics},
  volume={139},
  number={1},
  pages={305--358},
  year={2024},
  publisher={Oxford University Press}
}

@article{dawid1994selection,
  title={Selection paradoxes of Bayesian inference},
  author={Dawid, AP},
  journal={Lecture Notes-Monograph Series},
  pages={211--220},
  year={1994},
  publisher={JSTOR}
}

@article{yekutieli2012adjusted,
  title={Adjusted Bayesian inference for selected parameters},
  author={Yekutieli, Daniel},
  journal={Journal of the Royal Statistical Society Series B: Statistical Methodology},
  volume={74},
  number={3},
  pages={515--541},
  year={2012},
  publisher={Oxford University Press}
}

@article{woody2022optimal,
  title={Optimal post-selection inference for sparse signals: a nonparametric empirical Bayes approach},
  author={Woody, Spencer and Padilla, Oscar Hernan Madrid and Scott, James G},
  journal={Biometrika},
  volume={109},
  number={1},
  pages={1--16},
  year={2022},
  publisher={Oxford University Press}
}

@article{van2021significance,
  title={The significance filter, the winner's curse and the need to shrink},
  author={van Zwet, Erik W and Cator, Eric A},
  journal={Statistica Neerlandica},
  volume={75},
  number={4},
  pages={437--452},
  year={2021},
  publisher={Wiley Online Library}
}

@article{rubin1998more,
  title={More powerful randomization-based p-values in double-blind trials with non-compliance},
  author={Rubin, Donald B},
  journal={Statistics in medicine},
  volume={17},
  number={3},
  pages={371--385},
  year={1998},
  publisher={Wiley Online Library}
}

@article{gelman1996posterior,
  title={Posterior predictive assessment of model fitness via realized discrepancies},
  author={Gelman, Andrew and Meng, Xiao-Li and Stern, Hal},
  journal={Statistica sinica},
  pages={733--760},
  year={1996},
  publisher={JSTOR}
}

@incollection{rasines2022bayesian,
  title={Bayesian selective inference},
  author={Rasines, Daniel Garc{\'\i}a and Young, G Alastair},
  booktitle={Handbook of Statistics},
  volume={47},
  pages={43--65},
  year={2022},
  publisher={Elsevier}
}

@article{leiner2023data,
  title={Data fission: splitting a single data point},
  author={Leiner, James and Duan, Boyan and Wasserman, Larry and Ramdas, Aaditya},
  journal={Journal of the American Statistical Association},
  pages={1--12},
  year={2023},
  publisher={Taylor \& Francis}
}

@article{Li2027optimalPolicies,
 author={Li, Zhaoqi and Nassif, Houssam and Luedtke, Alex},
 title={Estimation of subsidiary performance metrics under optimal policies},
 journal={Statistica Sinica},
 year = {2027},
 volume={37},
 number={3},
pubstate={Accepted}
}

@inproceedings{Radwan2024Eval,
 author = {Mohamed A Radwan and Quinn Lanners and Jiasheng Zhang and Serkan Karakulak and Houssam Nassif and Murat Ali Bayir},
 title = {Counterfactual Evaluation of Ads Ranking Models through Domain Adaptation},
 booktitle = {Workshops of Conference on Recommender Systems (RecSys)},
 year = {2024},
}

\end{document}